\newcommand{\abs}[1]{\left| #1 \right|}
\newtheorem{theorem}{\textbf{Theorem}}
\begin{document}
\title{Distributed Power Allocation Strategies for Parallel Relay Networks}

\author{Min~Chen,~
        Semih~Serbetli,~\IEEEmembership{Member,~IEEE,}
        and~Aylin~Yener,~\IEEEmembership{Member,~IEEE}
\thanks{Manuscript received December 7, 2005; revised February 19, 2007; accepted October 25, 2007.
The editor coordinating the review of this paper and approving it
for publication was Kin K. Leung. This work was supported in part by
NSF grants CCF 02-37727, CNS 05-08114, CNS 06-26905 and DARPA
ITMANET Program grant W911NF-07-1-0028. This work was presented in
part in Globecom 2005, St. Louis, MO, November 2005.}
\thanks{Min Chen and Aylin Yener are with the Wireless
Communications and Networking Laboratory, Department of Electrical
Engineering, Pennsylvania State University, University Park, PA
16802. email: mchen@psu.edu, yener@ee.psu.edu.}% <-this % stops a space
\thanks{Semih Serbetli was with the
Wireless Communications and Networking Laboratory at the
Pennsylvania State University. He is now with Philips Research Labs,
Eindhoven, Netherlands. email: semih.serbetli@philips.com.} }

% The paper headers
\markboth{IEEE Transactions on Wireless Communications, accepted for
publication}{}
%%{Chen \MakeLowercase{\textit{et al.}}: Distributed Power Allocation
%Strategies for Parallel Relay Networks}
% The only time the second header will appear is for the odd numbered pages
% after the title page when using the twoside option.
%
% *** Note that you probably will NOT want to include the author's ***
% *** name in the headers of peer review papers.                   ***
% You can use \ifCLASSOPTIONpeerreview for conditional compilation here if
% you desire.

% If you want to put a publisher's ID mark on the page you can do it like
% this:
%\IEEEpubid{0000--0000/00\$00.00~\copyright~2007 IEEE}
% Remember, if you use this you must call \IEEEpubidadjcol in the second
% column for its text to clear the IEEEpubid mark.

\maketitle

\begin{abstract}
We consider a source-destination pair assisted by parallel
regenerative decode-and-forward relays operating in orthogonal
channels. We investigate distributed power allocation strategies for
this system with limited channel state information at the source and
the relay nodes. We first propose a distributed decision mechanism
for each relay to individually make its decision on whether to
forward the source data. The decision mechanism calls for each relay
that is able to decode the information from the source to compare
its relay-to-destination channel gain with a given threshold. We
identify the optimum distributed power allocation strategy that
minimizes the total transmit power while providing a target
signal-to-noise ratio at the destination with a target outage
probability. The strategy dictates the optimum choices for the
source power as well as the threshold value at the relays. Next, we
consider two simpler distributed power allocation strategies, namely
the \emph{passive source model} where the source power and the relay
threshold are fixed, and the \emph{single relay model} where only
one relay is allowed to forward the source data. These models are
motivated by limitations on the available channel state information
as well as ease of implementation as compared to the optimum
distributed strategy. Simulation results are presented to
demonstrate the performance of the proposed distributed power
allocation schemes. Specifically, we observe significant power
savings with proposed methods as compared to random relay selection.
\end{abstract}

\begin{IEEEkeywords}
Relay selection, distributed power allocation, decode-and-forward,
orthogonal parallel relays.
\end{IEEEkeywords}

\section{Introduction}
Relay-assisted transmission schemes for wireless networks are
continuing to flourish due to their potential of providing the
benefits of space diversity without the need for physical antenna
arrays \cite{cover_cap}. Among the earliest work on cooperative
networks are references \cite{Erkip_user1,Erkip_user2,Laneman_coop}.
A cooperative diversity model is proposed in \cite{Erkip_user1} and
\cite{Erkip_user2}, in which two users act as partners and
cooperatively communicate with a common destination, each
transmitting its own bit in the first time interval and the
estimated bit of its partner in the second time interval. In
\cite{Laneman_coop}, several low-complexity cooperative protocols
are proposed and studied, including fixed relaying, selection
relaying and incremental relaying, in which the relay node can
either amplify-and-forward (AF) or decode-and-forward (DF) the
signal it receives. In \cite{Laneman_space}, networks consisting of
more than two users that employ the space-time coding to achieve the
cooperative diversity are considered. Coded cooperation schemes are
discussed in \cite{Erkip_coding} and \cite{Jan-Nosratinia}, where a
user transmits part of its partner's codeword as well. References
\cite{Gupta_towards} and \cite{Kramer_cooperative} investigate the
capacity of relay networks of arbitrary size. References so far have
shown that, relay nodes can provide performance improvement in terms
of outage behavior \cite{Laneman_coop,Laneman_space}, achievable
rate
region\cite{Erkip_user1,Erkip_user2,Gupta_towards,Kramer_cooperative},
and error probability
\cite{Erkip_coding,Jan-Nosratinia,Ribeiro_symbol, Anghel_exact}.

Power efficiency is a critical design consideration for wireless
networks such as ad-hoc and sensor networks, due to the limited
transmission power of the (relay and the source) nodes. To that end,
choosing the appropriate relays to forward the source data, as well
as the transmit power levels of all the nodes become important
design issues. Optimum power allocation strategies for relay
networks are studied up-to-date for several structures and relay
transmission schemes. Three-node models are discussed in
\cite{Madsen_capacity} and \cite{Brown_resource}, while multi-hop
relay networks are studied in
\cite{Reznik_degraded,Hasna_optimal,Dohler_resource}. Relay
forwarding strategies for both AF and DF parallel relay channels in
wideband regime are proposed in \cite{Maric_forwarding}. Recent
works also discuss relay selection algorithms for networks with
multiple relays. Optimum relay selection strategies for several
models are identified in
\cite{Ribeiro_symbol,Maric_forwarding,Cai_achievable}. Recently
proposed practical relay selection strategies include pre-select one
relay \cite{Luo_link}, best-select relay \cite{Luo_link},
blind-selection-algorithm \cite{Lin_relay},
informed-selection-algorithm \cite{Lin_relay}, and cooperative relay
selection \cite{Zheng_effectiveness}. All of these proposed methods
result in power efficient transmission strategies. However, the
common theme is that, the implementations of these algorithms
require either the destination or the source to have substantial
information about the network, such as the channel state information
(CSI) of all communication channels, received signal-to-noise ratio
(SNR) at every node, the topology of the network, etc. Such
centralized power allocation/relay selection schemes may be
infeasible to implement due to the substantial feedback
requirements, overhead and delay they may introduce.

To overcome the obstacles of a centralized architecture, several
heuristic approaches have been proposed in
\cite{Hunter_distributed}, for multi-user networks with coded
cooperation. In this work, users select cooperation partners based
on a priority list in a distributed manner. Although the proposed
algorithms are advantageous due to their ease of implementation,
their performance depends on the fading conditions, and the
randomness in the channel may prevent the protocols from providing
full diversity. In \cite{Herhold}, an SNR threshold method is
proposed for the relay node to make a decision on whether to forward
the source data in a three-node model. Since there is only one relay
node in the considered system, relay selection is not an issue.
Reference \cite{Bletsas_simple} provides a relay selection algorithm
based on instantaneous channel measurements done by each relay node
locally. For the purpose of reducing the communication among relays,
a flag packet is broadcasted by the selected relay to notify the
other relays of the result.

In this paper, we investigate \emph{optimum distributed power
allocation} strategies for decode-and-forward parallel relay
networks, in which only partial CSI is accessible at the source and
the relay nodes. We first propose a distributed decision mechanism
for each relay node to individually make a decision on whether to
forward the source data. In contrast to the SNR based decision
protocol presented in \cite{Herhold}, in our proposed decision
mechanism, the relay makes its decision not only by considering its
received SNR, but also by comparing its relay-to-destination channel
gain with a given threshold, and no feedback from the destination is
needed. The overall overhead is further reduced as compared to the
method proposed in \cite{Bletsas_simple} since the distributed
decision mechanism does not require communication among relays.
Secondly, given such a relay decision scheme, and considering an
outage occurs whenever the SNR at the destination is lower than the
required value (target), we formulate the distributed power
allocation problem that aims to minimize the expected value of the
total transmit power while providing the target SNR at the
destination with an outage probability constraint. We identify the
solution of this problem, that consists of the optimum value of the
source power, and the corresponding relay decision threshold based
on the partial CSI available at the source. The extra power the
distributed power allocation mechanism needs as compared to the
optimum centralized power allocation mechanism, i.e., the
\emph{additional power expenditure}, is examined to observe the
tradeoff between the outage probability and the additional power
expenditure.

We next consider two special cases with simpler implementation,
namely the \emph{passive source model} where the source does not
contribute to the relay selection process, and the \emph{single
relay model} where one relay node is selected to forward the source
data based on limited CSI. For each case, we optimize the respective
relevant parameters. Our results demonstrate that considerable power
savings can be obtained by our proposed distributed relay selection
and power allocation schemes with respect to random relay selection.

The organization of the paper is as follows. In
Section~\ref{sysmodel}, the system model is described. The
distributed power allocation problem is formulated and the optimum
solution is given in Section~\ref{DRA}. In Section~\ref{SS}, we
investigate the passive source model and the single relay model.
Numerical results supporting the theoretical analysis are presented
in Section~\ref{NR}, and Section~\ref{Conc} concludes the paper.

\section{System Model and Background}
\label{sysmodel}

We consider a relay network consisting of a source-destination pair
and $N$ relay nodes employing decode-and-forward. We assume that the
relay nodes operate in pre-assigned orthogonal channels, e.g. in
non-overlapping time/frequency slots, or using orthogonal
signatures. The source is assumed to transmit in a time slot prior
to (and non-overlapping with) the relays. Let $f_i$ and $g_i$ denote
the fading coefficients of the source-to-relay and
relay-to-destination channels for the $i\mbox{th}$ relay node, for
$i=1,...,N$. The fading coefficient of the source-to-destination
link is denoted by $h$. We assume that each channel is flat fading,
and $f_i$, $g_i$ and $h$ are all independent realizations of zero
mean complex Gaussian random variables with variances
$\sigma_{f_i}^2$, $\sigma_{g_i}^2$ and $\sigma_h^2$ per dimension,
respectively.

\begin{figure} [t]
\centering
\includegraphics[width=3.5in]{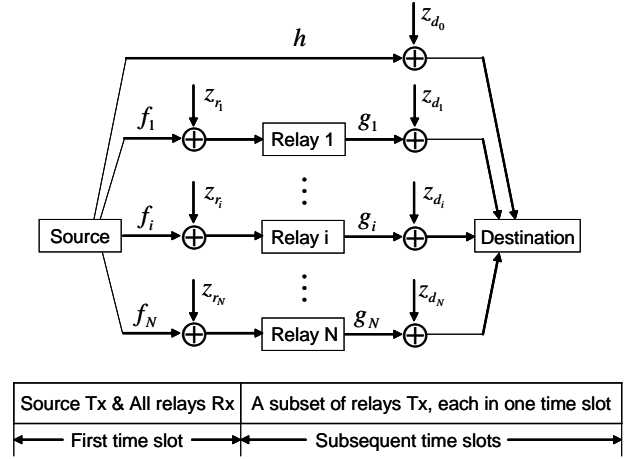}
\caption{Relay network system model.} \label{fig-sysmodel}
\end{figure}
Without loss of generality, we will assume that we have a time
slotted system in the sequel. The system model is shown in Figure
\ref{fig-sysmodel}. In the first time slot, the source broadcasts
$X_o$ with power $P_s$. The destination observes $y_{d_0}$:
\begin{equation}
y_{d_0}=\sqrt{P_s}hX_o+z_{d_0}
\end{equation}
and the $i$th relay observes $y_{r_i}$:
\begin{equation}
y_{r_i}=\sqrt{P_s}f_iX_o+z_{r_i}~~~\mbox{for}~~i=1,...,N
\end{equation}
where $z_{d_0}$ and $\{z_{r_i}\}_{i=1}^N$ are Additive White
Gaussian Noise (AWGN) terms at the destination and the relays,
respectively. Assume without loss of generality that they are of
variance $1/2$ per dimension. The $i$th relay node is said to be
{\it reliable} and can correctly decode $X_o$ when its received SNR,
$SNR_{r_i}$, satisfies
\begin{equation}
\label{reliable} SNR_{r_i}=P_s\abs{f_i}^2\geq{SNR_{target}}
\end{equation}
where ${SNR_{target}}$ is the given decodability constraint. In the
subsequent time slots following the first one, the relays that
belong to the set of reliable relays, $A_R$, can decode and forward
the source data to the destination, each in its assigned time slot.
Throughout this paper, we assume that the reliable relays simply
regenerate the source data $X_o$
\cite{Laneman_coop,Brown_resource,Hasna_optimal}. The signal
received at the destination from the reliable relay $i$ is
\begin{equation}
y_{d_i}=\sqrt{P_i}g_iX_o+z_{d_i}, ~~~~~~ i\in A_R
\end{equation}
where $P_i$ is the transmit power of the $i$th relay node, and
$z_{d_i}$ is the AWGN term at the $i$th relay-to-destination
channel. The destination combines signals received from the reliable
relay nodes and the direct link with a maximum ratio combiner (MRC),
and the resulting SNR at the destination is
\begin{equation}
{SNR}_d=P_s\abs{h}^2+\sum_{i \in A_R} P_i\abs{g_i}^2
\end{equation}
We consider that the destination can correctly receive the source
data whenever ${SNR_{d}}\geq{SNR_{target}}$.

Given this system model, the power allocation problem for
regenerative DF relay networks with parallel relays can be posed as
\begin{eqnarray}
\label{orig}
\underset{Ps, \{P_i\}} \min  & P_s+\sum_{i \in A_R}P_i\\
 \text{s. t.  }&  P_s\abs{h}^2+\sum_{i \in
A_R}P_i\abs{g_i}^2\geq{SNR_{target}} \label{qosatsource}\\
& P_s\abs{f_i}^2\geq{SNR_{target}} ~~\text{
 for each}~ i \in A_R
\end{eqnarray}
We note that the resulting power allocation strategy may prevent
some reliable relays from participating simply by assigning zero
power to those relays.

The optimum power allocation strategy for DF relay networks using
different code books at the relays is identified in
\cite{Maric_forwarding}. This strategy, re-stated below for the
benefit of the reader, is easily seen to be the optimum {\it
centralized} power allocation strategy for regenerative DF relay
networks as well.
\begin{eqnarray}
P_s^* &=& \frac{{SNR_{target}}}{\abs{f_{k^*}}^2} \label{sourceoptpower}\\
P_{i}^* &=& \left\{ \begin{array}{ll} \left(
\frac{{SNR_{target}}-\abs{h}^2{SNR_{target}}/\abs{f_{k^*}}^2}
{\abs{g_{k^*}}^2}\right)^+,~i=k^*\\
0,\text{otherwise}
\end{array} \right. \label{relayoptpower}\\
k^* &=& \arg\underset{\{k \in
A_E\}}\min\left[\frac{1}{\abs{f_k}^2}+\frac{1}{\abs{g_k}^2}-\frac{\abs{h}^2}{\abs{f_k}^2\abs{g_k}^2}\right]\label{optrelsel}
\end{eqnarray}
where $(\cdot)^+=\max(0,\cdot)$. In (\ref{optrelsel}), the set $A_E$
denotes the set of efficient relays such that the transmission
through the relay is more power efficient than the direct
transmission, i.e.,
\begin{equation}
A_E={\{i|(\abs{f_i}^2\ge|h|^2) \cap (\abs{g_i}^2\ge|h|^2), i=1..N\}}
\end{equation}

Observe that when the source power is assigned as in
(\ref{sourceoptpower}), the relay node  $k^*$, chosen according to
(\ref{optrelsel}), is the only relay node with received SNR equal to
$SNR_{target}$. Thus, each relay node can decide whether it is the
intended relay node by simply checking its received SNR. When the
SNR contribution of the relay node,
${{SNR_{target}}-\abs{h}^2{SNR_{target}}/\abs{f_{k^*}}^2}$, is
indicated explicitly by the source, the intended relay node can
calculate its required transmit power as in (\ref{relayoptpower})
and forward $X_o$ to the destination. Alternatively, the source can
broadcast the selected relay and the optimum power level in a side
channel.

A moment's thought reveals that to implement the strategy given by
(\ref{sourceoptpower})-(\ref{optrelsel}), the full CSI, i.e.,
$\{f_i,g_i\}_{i=1}^N$ and $h$, at the source node, and the
individual CSI, i.e., $\{f_i,g_i\}$, at relay node $i$ are needed.
Although (\ref{sourceoptpower})-(\ref{optrelsel}) provides the most
power efficient DF relay transmission strategy, its centralized
nature, i.e., the fact that it requires the channel estimate of each
link and the feedback of this information to the source, may render
its implementation impractical. As such, distributed strategies are
needed. In the following, we devise efficient distributed power
allocation strategies.

\section{Distributed Power Allocation}
\label{DRA} Our aim in this paper is to find power allocation
schemes that {\it do not require a centralized mechanism}, and
utilize the limited available CSI at each node. In practice, it is
feasible that the channels are estimated by training before the
actual data transmission, when each node operates in TDMA mode. When
the source transmits the training bits, all relay nodes can
simultaneously estimate their source-to-relay fading coefficients
$\{f_i\}_{i=1}^N$ due to the broadcast nature of the wireless
medium. Similarly, when the relay node $i$ transmits the training
bits, the source-to-relay coefficient $f_i$ can be estimated at the
source. However, for $\{g_i\}_{i=1}^N$ to be available at the
source, the feedback from the destination for each realization is
required, which may be impractical. Thus, we investigate distributed
power allocation schemes when the source has the realizations
$\{f_i\}_{i=1}^N$ and $h$, and only the statistics of $\{g_i\}$. The
relay nodes are assumed to have their individual CSI, i.e., $f_i$
and $g_i$ for relay $i$, $i=1,...,N$.

\subsection{Distributed Decision Mechanism}
\label{DDM} We first derive a distributed decision mechanism with
the model assumptions given above. Since the source has only the
statistical description instead of the realizations
$\{g_i\}_{i=1}^N$, the optimum centralized power allocation
indicated by (\ref{sourceoptpower})-(\ref{optrelsel}) cannot be
implemented by the source. Also, while it is clear that for a fixed
source power, the best strategy is transmitting through the reliable
relay node that has the highest relay-to-destination channel gain,
this mechanism requires a comparison of all $\{g_i\}_{i=1}^N$. The
distributed nature of the strategy requires that each relay should
make its decision relying only on its individual CSI. Since each
relay can easily determine whether it is a reliable relay by using
its SNR value, i.e., its individual CSI, we propose that the $i$th
reliable relay decides it will be a forwarding node when its channel
gain to the destination satisfies
\begin{eqnarray}
\label{best} \abs{g_i}^2\geq\gamma
\end{eqnarray}
where $\gamma$ is a given threshold value. Relay $i$ then forwards
the decoded signal with {\it sufficient} power. That is, we have
\begin{eqnarray}
P_{i}^{\ast}= SNR'_{target}/{\abs{g_i}^2}
\end{eqnarray}
where $SNR'_{target}=(SNR_{target}-P_s |h|^2)^+$ denotes the SNR
contribution from the relay.\footnote{$\gamma$ and $SNR'_{target}$
values are assumed to be broadcasted by the source on a side
channel.}

We note that such a distributed decision mechanism includes the
probability that more than one relay will transmit. Similarly, we
note that with any $\gamma>0$, the scheme results in a nonzero
probability that none of the relay nodes satisfies (\ref{best}), and
hence a nonzero outage probability
$\text{Prob}(SNR_d<SNR_{target})$. As such, the source should
determine the optimum source power and the corresponding threshold
$\gamma$ by considering the realizations of $\{f_i\}$ and the
randomness in $\{g_i\}$, to meet a system given specification, i.e.,
an outage probability requirement.

\subsection{Source Power Allocation and Threshold Decision}
Given the above described strategy, %in \ref{DDM},
we now investigate how the source should decide the value of its
transmit power $P_s$ and the relay decision threshold $\gamma$, to
satisfy the target SNR, $SNR_{target}$ at the destination with a
target outage probability, $\rho_{target}$.

From the source's point of view, the relay transmit powers are
random variables with known statistics because the realizations
$\{g_i\}_{i=1}^N$ are not available at the source. We have the pdf
of $X_i=|\mathrm{g_i}|^2$ as
\begin{equation}
p_{X_i}(x_i)=\frac{1}{2\sigma^2_{g_i}}\exp\left(-\frac{x_i}{2\sigma^2_{g_i}}\right),
~~\text{for}~~ i \in [1,...,N]
\end{equation}
where $\mathrm{g_i}$ is a zero mean complex Gaussian random variable
with variance $\sigma^2_{g_i}$ per dimension. We consider the
expected value of the transmit power of relay $i$
\begin{eqnarray}
\label{relaypower} E{[P_i]}&=&\int_\gamma^\infty \frac{SNR'_{target}}{x_i}p_{X_i}(x_i) dx_i\\
&=&\int_\gamma^\infty \frac{SNR'_{target}}{2\sigma_{g_i}^2x_i}
\exp(-\frac{x_i}{2\sigma_{g_i}^2})dx_i
\end{eqnarray}
The distributed power allocation problem can then be expressed as
\begin{eqnarray}
\label{optprob}
\underset{\gamma,P_s}\min & P_s+\sum_{i \in A_R(P_s)}E{[P_i]}\\
\text{s. t.}& \text{Prob}(SNR_{d} \le {SNR_{target}}) \le \rho_{target} \label{probconstraint}\\
& ~~~P_s\abs{f_i}^2\geq{SNR_{target}}~~\text{
 for each}~i \in A_R
\end{eqnarray}
where we explicitly state the dependency of the set of reliable
relay $A_R$ on $P_s$. Observe that the deterministic
quality-of-service guarantee in (\ref{qosatsource}) is replaced by
the probabilistic constraint (\ref{probconstraint}). The following
theorem provides the optimum solution:
\begin{theorem}\label{theorem1}The optimum source power, $P_s^{**}$, can only be one of the $(M+1)$ discrete values in the set
\begin{eqnarray}
\{\frac{SNR_{target}}{|f_1|^2},...,\frac{SNR_{target}}{|f_M|^2},
\frac{SNR_{target}}{|h|^2}\} \label{set}
\end{eqnarray}
where we reorder the indices of the relay nodes such that $|f_1|^2
> |f_2|^2 > ...> |f_M|^2 > |h|^2> |f_{M+1}|^2... >
|f_N|^2$, i.e.,
$\frac{SNR_{target}}{|f_1|^2}<\frac{SNR_{target}}{|f_2|^2}<...
<\frac{SNR_{target}}{|f_M|^2}<\frac{SNR_{target}}{|h|^2}<\frac{SNR_{target}}{|f_{M+1}|^2}<...<\frac{SNR_{target}}{|f_N|^2}$.
\footnote{$P_s=\frac{SNR_{target}}{|h|^2}$ is the largest candidate
of the source power. With this power level, source can reach the
destination via the direct link and relay transmission is not
needed.} For each possible $P_s^{**}$ value, there exist a
corresponding reliable relay set $A_R^{**}$, and a unique optimum
threshold value, $\gamma^{**}$.
\end{theorem}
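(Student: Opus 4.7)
The plan is to exploit the piecewise-constant structure of the reliable relay set $A_R(P_s)=\{i:P_s|f_i|^2\ge SNR_{target}\}$, whose only jumps as $P_s$ varies occur at $P_s=SNR_{target}/|f_i|^2$. With the reordering of the theorem statement ($|f_1|^2>\cdots>|f_M|^2>|h|^2>|f_{M+1}|^2>\cdots>|f_N|^2$), these breakpoints partition $[0,\infty)$ into intervals on which $A_R$ is constant. Two regimes are easy to dispose of: any $P_s<SNR_{target}/|f_1|^2$ yields $A_R=\emptyset$ and certain outage, hence is infeasible whenever $\rho_{target}<1$; any $P_s\ge SNR_{target}/|h|^2$ forces $SNR'_{target}=0$ and hence $E[P_i]=0$ for all $i$, collapsing the objective to $P_s$, so the best such choice is $P_s=SNR_{target}/|h|^2$, which dominates every larger breakpoint $SNR_{target}/|f_i|^2$ with $i>M$.

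Next I would fix an arbitrary interval on which $A_R$ is constant and $P_s<SNR_{target}/|h|^2$. Each $E[P_i]$ factors as $(SNR_{target}-P_s|h|^2)\,D_i(\gamma)$ for a strictly positive, strictly decreasing function $D_i$ of $\gamma$, so the objective takes the form
\[
F(P_s,\gamma)=P_s+(SNR_{target}-P_s|h|^2)\sum_{i\in A_R}D_i(\gamma).
\]
By independence of $\{g_i\}$, the outage probability equals $\prod_{i\in A_R}(1-\exp(-\gamma/(2\sigma_{g_i}^2)))$, which depends only on $A_R$ and $\gamma$ (not on $P_s$) and is strictly increasing in $\gamma$. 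Since $\sum_{i\in A_R}D_i(\gamma)$ is strictly decreasing and its coefficient is positive, $F$ is strictly decreasing in $\gamma$ for fixed $P_s$, so the outage constraint must bind at the optimum, determining a unique $\gamma^{**}(A_R)$ via a monotone one-dimensional equation.

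Substituting $\gamma^{**}(A_R)$ back, $F$ becomes affine in $P_s$ on the interval, with slope $1-|h|^2\sum_{i\in A_R}D_i(\gamma^{**}(A_R))$, and is therefore minimized at an endpoint. Aggregating the interval-wise minima together with the direct-link candidate shows that the global optimum $P_s^{**}$ lies in the $(M+1)$-element set $\{SNR_{target}/|f_1|^2,\ldots,SNR_{target}/|f_M|^2,SNR_{target}/|h|^2\}$, and each candidate then fixes $A_R^{**}$ and hence a unique $\gamma^{**}$ through the binding outage constraint. The main obstacle I anticipate is the coupling of $P_s$ and $\gamma$ in both the objective and the constraint; the key step that unblocks it is observing that $A_R(P_s)$ is piecewise constant, which separates the joint optimization into an inner monotone inversion for $\gamma^{**}(A_R)$ and an outer affine minimization in $P_s$ on each piece.
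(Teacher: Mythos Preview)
Your proposal is correct and takes essentially the same approach as the paper: both exploit that $A_R(P_s)$ is piecewise constant in $P_s$, argue that the outage constraint binds (since the objective is decreasing in $\gamma$ while the outage probability is increasing), and then note that on each piece the objective is affine in $P_s$---the paper writes the slope as $1-|h|^2/|g_{\mathrm{eff}}|^2$ with $1/|g_{\mathrm{eff}}|^2=\sum_{i\in A_R}D_i(\gamma)$ in your notation---so the minimum is attained at a breakpoint. Your explicit decoupling into an inner monotone inversion for $\gamma^{**}(A_R)$ followed by an outer affine minimization in $P_s$ is a slightly cleaner packaging, but the substance is identical.
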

\begin{proof} Assume that $P_s=SNR_{target}/|f_i|^2$ and there
exist a reliable relay set $A_R^\dag$ containing $R_i$ relay nodes
and a corresponding threshold value $\gamma^\dag$. Then, the
expected value of the total power is
\begin{eqnarray}\label{ptotal}
E[P_{total}]=P_s+\nonumber\qquad\qquad\qquad\qquad\qquad\qquad\qquad\qquad\\
\sum_{i \in A_R^\dag} \int_{\gamma^\dag}^\infty
\frac{(SNR_{target}-P_s |h|^2)^+}{2\sigma_{g_i}^2x_i}
\exp(\frac{-x_i}{2\sigma_{g_i}^2})dx_i
\end{eqnarray}
We consider the set of transmitting relays as a super relay node
whose effective channel gain to the destination is
$\abs{g_{eff}}^2$. Thus, the expected value of the total power can
be expressed as
\begin{equation}
E[P_{total}]=P_s+\frac{(SNR_{target}-P_s |h|^2)^+}{\abs{g_{eff}}^2}
\end{equation}
where
\begin{equation}
\abs{g_{eff}}^2=\frac{1}{\sum_{i \in A_R^\dag}
\int_{\gamma^\dag}^\infty \frac{{1}}{2\sigma_{g_i}^2x_i}
\exp(-x_i/2\sigma_{g_i}^2) dx_i}
\end{equation}

The direct transmission is more power efficient than the
relay-assisted transmission when the channel gain of the direct
link, $\abs{h}^2$, is greater than the effective channel gain of the
relay-to-destination links, $\abs{g_{eff}}^2$, i.e.,
\begin{eqnarray}
|h|^2>\abs{g_{eff}}^2
\end{eqnarray}
In this case, the optimum source power is
$P_s^{**}=SNR_{target}/|h|^2$.

On the other hand, the relay transmission is preferred when
\begin{eqnarray}\label{relayisprefered}
|h|^2<\abs{g_{eff}}^2
\end{eqnarray}
We note that the derivative of $E[P_{total}]$ with respect to $P_s$
is
\begin{eqnarray}
\frac{\partial E[P_{total}]}{\partial P_s}=1 -
\frac{|h|^2}{\abs{g_{eff}}^2}
\end{eqnarray}
and (\ref{relayisprefered}) implies $\frac{\partial
E[P_{total}]}{\partial P_s}>0$, which means increasing $P_s$ beyond
$ SNR_{target}/|f_i|^2$ until the value $SNR_{target}/|f_{i+1}|^2$
for $i=1,\ldots,M$ does not change $A_R^\dag$ but increases the
expected value of the total power $E[P_{total}]$. Thus, the optimum
source power $P_s^{**}$ can be only one of the (M+1) discrete values
in the set given by (\ref{set}).

For $P_s=SNR_{target}/|f_i|^2$, one of the candidates of the optimum
source power, and its corresponding reliable set $A_R^\dag$, when
$\gamma $ increases, the expected value of the total power
decreases, while the outage probability increases. Therefore,
threshold $\gamma^\dag$ should be chosen as the value that satisfies
the outage probability with equality, i.e.,
\begin{eqnarray} \label{threshold}
\prod _{i\in A_R^\dag} (1-\int_{\gamma^\dag}^\infty
\frac{{1}}{2\sigma_{g_i}^2} \exp(-\frac{x_i}{2\sigma_{g_i}^2})
dx_i)=\rho_{target}
\end{eqnarray}
It can be further reduced to
\begin{equation}
\prod _{i\in
A_R^\dag}(1-\exp(-\frac{\gamma^\dag}{2\sigma_{g_i}^2}))=\rho_{target}
\end{equation}
Let $\sigma_{g_{min}}^2$$=$$\min\{\sigma_{g_i}^2, i\in A_R^\dag\}$
and $\sigma_{g_{max}}^2$$=$$\max\{\sigma_{g_i}^2, i\in A_R^\dag\}$,
we have
\begin{eqnarray}
(1-\exp(-\frac{\gamma^\dag}{2\sigma_{g_{max}}^2}))^{|A_R^\dag|}\leq
\prod_{i\in
A_R^\dag}(1-\exp(-\frac{\gamma^\dag}{2\sigma_{g_i}^2}))\nonumber\\
\leq(1-\exp(-\frac{\gamma^\dag}{2\sigma_{g_{min}}^2}))^{|A_R^\dag|}
\end{eqnarray}
Therefore, $\gamma^\dag$ is bounded as
\begin{equation}
\gamma^\dag_{min}\leq \gamma^\dag \leq \gamma^\dag_{max}
\end{equation}
where
$\gamma^\dag_{min}=-\ln(1-(\rho_{target})^{\frac{1}{|A_R^\dag|}})\cdot2\sigma_{g_{min}}^2$
and
$\gamma^\dag_{max}=-\ln(1-(\rho_{target})^{\frac{1}{|A_R^\dag|}})\cdot2\sigma_{g_{max}}^2$.
The value of $\gamma^\dag$ can be obtained by a search in the range
$[\gamma^\dag_{min}, \gamma^\dag_{max}]$ numerically.

Note that for $P_s=SNR_{target}/|h|^2$, i.e., when the source can
reach the destination via the direct link, $\gamma^\dag=\infty$ to
prevent any redundant relay transmission and power consumption.
\end{proof}

The source should simply compare $(M+1)$ possible $E[P_{total}]$
values and decide the best $(P_s^{**}, \gamma^{**})$ pair. Note that
when the expected value of the total transmit power is higher than
that with direct transmission, the source will prefer to transmit
directly to the source.\footnote{The source would communicate this
decision via the side channel.}

The cost of the lack of full CSI at the source, i.e., the cost of
using the distributed relay decision mechanism, is an additional
power expenditure. Let $P_{total}^{**}$ and $P^*_{total}$ denote the
total power of the proposed optimum distributed power allocation
scheme, and that of the optimum centralized allocation scheme which
is the sum of the source power $P_s^*$ and the relay power $P_i^*$
given in (\ref{sourceoptpower})-(\ref{optrelsel}), respectively. The
expected value of the additional power expenditure is:
\begin{equation}\label{waste}
E[P_{add}]=E[P^{**}_{total}]-E[P^*_{total}]\qquad\qquad\qquad\qquad\qquad\qquad\qquad\qquad\qquad
\end{equation}
\begin{equation}
 \label{waste1}= P_s^{**} +\sum_{i \in A_R^{**}}
\int_{\gamma^{**}}^\infty \frac{SNR'_{target}}{2\sigma_{g_i}^2x}
\exp(-x/2\sigma_{g_i}^2) dx-E[P^*_{total}]
\end{equation}

We observe that in (\ref{threshold}), $\rho_{target}$ is an
increasing function of $\gamma$, while in (\ref{waste1}),
$E[{P_{add}}]$ is a decreasing function of $\gamma$. Thus, there
exists a tradeoff between the outage probability and the additional
power expenditure: reducing the target outage probability will
require more additional power. While designing the power allocation
strategy, a reasonable target outage probability should be chosen in
accordance with this tradeoff.

\section{Simpler Schemes}\label{SS}
The optimum distributed power allocation strategy still requires the
realizations of $\{f_i\}_{i=1}^N$ and $h$, i.e., the CSI of the
source-to-relay and the direct links, available at the source. It
also requires the source to update the threshold $\gamma^{**}$ and
the source power $P_s^{**}$ at each time when these channel
coefficients change. Due to further limitations on the availability
of this CSI and for implementation complexity, we may opt for even
simpler schemes. In this context, we next consider two special
cases, namely the passive source model and the single relay model.
For both cases, we have the previous assumption that each relay has
its individual CSI, i.e., $f_i$ and $g_i$ for relay $i$,
$i=1,...,N$. Below are the brief descriptions of the two models.
\begin{itemize} \item {\it {Passive source model:}} We assume that the
source only has the statistics of {\it all} communication channels,
and does not participate in the relay selection process at all. For
this model, we fix the source power $P_s$, and the relay decision
threshold $\gamma$, and employ the same distributed decision
mechanism as proposed in \ref{DDM}.
\end{itemize}
\begin{itemize} \item {\it {Single relay
model:}} We assume that the source has CSI of the direct and the
source-to-relay links, i.e., $\{f_i\}_{i=1}^N$ and $h$, and the
statistics of the relay-to-destination links $\{g_i\}$. We have the
source select {\it one} assisting relay node to satisfy the system
requirements on received SNR and the outage probability.
\end{itemize}

\subsection{Passive Source Model} \label{PSM} In practice, we may have situations where the
source does not have the realizations of any of the channels, but
has access only to the statistical descriptions of them. It may also
be the case that the source may not be able to do computationally
expensive operations, e.g., due to hardware constraints in sensor or
RFID networks. We term such source nodes, {\it passive}. Considering
these practical issues, in this section, we investigate the
distributed power allocation for the passive source model.

Since each relay has its individual CSI, we can apply the same
distributed decision mechanism as proposed in Section \ref{DDM}.
However, a passive source cannot optimize its power $P_s$ or
$\gamma$ based on channel realizations; they should be found
off-line based on the statistical descriptions of the channel and
kept fixed for all realizations. Note that, different from Section
\ref{DRA}, in this case, we may end up having no reliable relay if
the fixed source power value is too small.

Let us now develop the criterion on how to choose the source power
$P_s$ and the threshold $\gamma$ by considering the outage
probability and the additional power expenditure jointly. The outage
probability of the direct link is given by
\begin{eqnarray}\label{eq-directlink}
d_{out}=\text{Prob}\{P_s\abs{{\mathrm{h}}}^2<
SNR_{target}\}\nonumber\\
=1-\exp\left(-\frac{{SNR_{target}}}{P_s\cdot
2\sigma_{h}^2}\quad\right)
\end{eqnarray}
%where $\mathrm{h}$ is the complex Gaussian random variable with
%$C(0,\sigma^2_{h})$.
For clarity of exposition, let us define $a_i$ as the probability
that the $i$th relay is a reliable relay, $b_i$ as the probability
that the $i$th relay satisfies (\ref{best}), and $c_i$ as the
probability that the $i$th relay is in set $A_C$, which denotes the
set of relays that satisfy both (\ref{reliable}) and (\ref{best}).
We have
\begin{eqnarray}
a_i=\text{Prob}\{i\in
A_R\}=\text{Prob}\{P_s\abs{\mathrm{f_i}}^2\geq{SNR_{target}}\}\nonumber\\
=\exp\left(-\frac{ {SNR_{target}}}{P_s\cdot
2\sigma_{f_i}^2}\right)\qquad\qquad\quad\qquad\qquad\quad
\end{eqnarray}
\begin{equation}
b_i=\text{Prob}\{\abs{\mathrm{g_i}}^2\geq\gamma\}=\exp\left(-\frac{\gamma}{2\sigma_{g_i}^2}\right)\qquad\qquad\quad
\end{equation}
\begin{equation}
c_i=\text{Prob}\{i\in A_C\}=a_i\cdot
b_i\qquad\qquad\quad\qquad\quad\quad
\end{equation}
where $\mathrm{f_i}$ and $\mathrm{g_i}$ are zero mean complex
Gaussian random variables with variances $\sigma^2_{f_i}$ and
$\sigma^2_{g_i}$ per dimension, respectively. The overall outage
probability becomes
\begin{eqnarray}
\label{overallout} \rho_{outage}=\text{Prob}\{A_C=\emptyset\}\cdot
d_{out}\qquad\qquad\qquad\qquad\quad
\nonumber\\=\prod_{i=1}^{N}\text{Prob}\{i\not\in A_C\}d_{out}
=\prod_{i=1}^{N}\left[1-c_i\right]\cdot d_{out}
\end{eqnarray}

Observe in (\ref{overallout}) that $\rho_{outage}$ is a function of
the source transmit power, $P_s$ and the threshold $\gamma$. To
choose the $(P_s,\gamma)$ pair that satisfies (\ref{overallout}), we
make two observations. The first one is
\begin{eqnarray}
\label{prop1} \rho_{outage}\ge \prod_{i=1}^{N}\left[1-a_i\right]
d_{out}
\end{eqnarray}
where equality occurs when $\gamma=0$, that is when {\it all}
reliable relays forward the source data. Thus, to achieve a target
outage probability, $\rho_{target}$, there exists a minimum source
power $P_s$, that provides the target outage probability with
$\gamma=0$. Note that when $P_s$ is chosen close to this minimum
value, the corresponding $\gamma$ factor will be close to 0,
resulting in many relays transmitting. This may result in
unnecessarily large extra power expenditure and care must be
exercised to choose the correct pair. Secondly, we observe
\begin{eqnarray}
\label{prop5} \rho_{outage} \ge \prod_{i=1}^{N}\left[1-b_i\right]
d_{out}
\end{eqnarray}
Thus, for a given $P_s$ value, $\gamma$ should be strictly less than
some threshold to provide a target outage probability.

When we consider a special case where $d_{out}\approx 1$, i.e., the
direct link is not reliable, and $\{\mathrm{f_i}\}_{i=1}^N$ and
$\{\mathrm{g_i}\}_{i=1}^N$ are i.i.d., we have
\begin{eqnarray}
\label{prop2} \rho_{outage}\approx (1-\exp{(-\frac{SNR_{target}}{2
P_s \sigma^2_f}-\frac{\gamma}{2\sigma^2_g})})^N
\end{eqnarray}
and $(P_s,\gamma)$ pair that aims to achieve an outage probability
$\rho_{target}$ should satisfy
\begin{eqnarray}
\label{prop3} \frac{SNR_{target}}{2 P_s
\sigma^2_f}+\frac{\gamma}{2\sigma^2_g}\approx
-ln(1-(\rho_{target})^{1/N})
\end{eqnarray}

Since the relays employ the distributed decision mechanism proposed
in \ref{DDM}, there exists a nonzero probability that additional
relay nodes besides the best relay decide to forward the source
data. In this case, additional power is expended. For a realization
of $|\mathrm{g_i}|^2$, $x_i=|g_i|^2\ge \gamma$, the probability that
relay $i$ makes a forwarding decision even though it is not the best
relay in set $A_R$, $W_i(x_i)$, can be expressed as
\begin{eqnarray}
W_i(x_i)\nonumber\qquad\qquad\qquad\qquad\qquad\qquad\qquad\qquad\qquad\qquad\\
=\text{Prob}(\text{Wrong forwarding decision by relay
$i$}|x_i\ge\gamma)
\end{eqnarray}
\begin{eqnarray}
=\text{Prob}\{(i \in A_R)\cap \nonumber\qquad\qquad\qquad\qquad\qquad\qquad\qquad\\
(\exists j \in A_R \text{ and } j \neq i, \text{ such that
}X_j>x_i\ge\gamma)\}
\end{eqnarray}
\begin{eqnarray}
=\text{Prob}\{i \in A_R\}\cdot\nonumber\qquad\qquad\qquad\qquad\qquad\qquad\qquad\quad \\
\text{Prob}\{\exists j \in A_R \text{ and } j \neq i, \text{ such
that }X_j>x_i\ge\gamma\}
\end{eqnarray}
\begin{eqnarray}
=\text{Prob}\{i \in A_R\}\cdot (1-\text{Prob}\{\forall j \in
[1,...,N] \text{ and } j
\neq i, \nonumber\\
(j \notin A_R)\cup((j \in A_R)\cap(X_j<x_i))\})
\end{eqnarray}
\begin{eqnarray}
=\text{Prob}\{i \in A_R\}\cdot(1-\prod_{j=1, i\neq
j}^N(\text{Prob}\{j \notin A_R\}+\nonumber\qquad\quad\\
\text{Prob}\{j \in A_R\}\cdot\text{Prob}\{ X_j<x_i\}))
\end{eqnarray}
\begin{eqnarray}
=a_i\cdot(1-\prod_{j=1, i\neq
j}^N((1-a_j)+a_j\cdot(1-\exp(-\frac{x_i}{2\sigma_{g_j}^2}))))
\end{eqnarray}
\begin{eqnarray}
=\exp(-\frac{SNR_{target}}{P_s2\sigma_{f_i}^2})\cdot\nonumber\qquad\qquad\qquad\qquad\qquad\qquad\quad\\
(1-\prod_{j=1, i\neq j}^N
(1-\exp(-\frac{SNR_{target}}{P_s2\sigma_{f_j}^2})\cdot\exp(-\frac{x_i}{2\sigma_{g_j}^2})
))
\end{eqnarray}

If relay $i$ makes a wrong forwarding decision, it will transmit
with power value $SNR'_{target}/x_i$. In essence, the power of relay
$i$ is wasted, since the relay with the highest relay-to-destination
channel gain in $A_R$ also transmits the source data to the
destination reliably but with a lower power. We have the expected
value of the wasted power of relay $i$, $E[P_{{waste}_i}]$ as
\begin{eqnarray}
\label{psvwaste1} E[P_{{waste}_i}]=\int_{\gamma}^{\infty} W_{i}(x_i)
\frac{{SNR'_{target}}}{x_i}p_{X_i}(x_i)dx_i
\end{eqnarray}
The expected value of the additional power expenditure of all relays
is\footnote{Observe that $E[P_{waste_i}]=0$ if $i$ is an unreliable
relay or the best reliable relay.}
\begin{equation}
\label{psvwaste2} E[P_{{add_{Relay}}}]=\sum_{i=1}^N E[P_{{waste}_i}]
\end{equation}

Observe that in (\ref{overallout}), $\rho_{outage}$ is an increasing
function of $\gamma$ when other parameters are fixed, while in
(\ref{psvwaste2}), the expected value of the additional power
expenditure is a decreasing function of $\gamma$. There exists a
tradeoff between the outage probability and the additional relay
power expenditure. A reasonable pair of the source power and the
threshold $\gamma$ should be chosen by considering both the tradeoff
and the properties of the $(P_s, \gamma)$ pair in
(\ref{overallout}), (\ref{prop1}) and (\ref{prop5}).

\subsection{Single Relay Model} \label{RSM}

The distributed power allocation schemes proposed up to this point
in general result in multiple relays transmitting to the
destination, causing additional power expenditure. In this section,
we investigate the case where only one relay node selected by the
source is allowed to transmit. In contrast to the centralized
solution in (\ref{sourceoptpower})-(\ref{optrelsel}), however, we
consider that the source has limited CSI. In particular, we
re-emphasize that, only the statistical descriptions of the
relay-to-destination channels are available at the source. Adopting
the single relay model, we will see that the task of finding the
threshold value for the relay forwarding decisions can be
substantially simplified as compared to the optimum distributed
strategy.

When relay $k$ is selected, the source transmits with just enough
power $P_s=SNR_{target}/|f_k|^2$ to make relay $k$ a reliable relay.
So, the source-to-relay link does not have outage. However, since
relay $k$ will forward the decoded source data only when its channel
gain to the destination satisfies $|g_k|^2\geq \tau_k$, we may have
an outage on the relay-to-destination link. Observe that, if relay
$k$ decides to forward the data it will do so with power
$P_k=SNR'_{target}/{\abs{g_{k}}^2}$.

Therefore, to satisfy the outage constraint $\rho_{target}$, the
relay-to-destination gain threshold, $\tau_k$ should satisfy
\begin{eqnarray}
\int_{\tau_k}^\infty p_{X_k}(x_k)d(x_k)&=&\int_{\tau_k}^\infty
\frac{1}{2\sigma_{g_k}^2}\exp\left(-\frac{x_k}{2\sigma_{g_k}^2}\right)dx_k \nonumber\\
&=&1-\rho_{target}
\end{eqnarray}
Thus, we have
\begin{equation}
\label{tau} \tau_k=-2\sigma_{g_k}^2\ln(1-\rho_{target})
\end{equation}
The expected value of the transmit power of the relay node is
\begin{eqnarray}
E[P_k]& = & \int_{\tau_k}^\infty \frac{SNR_{target}^{'}}{x_k}
p_{X_k}(x_k) dx_k
\\&=&\frac{\int_{\tau}^\infty
\frac{SNR_{target}^{'}}{x_k} \exp(-x_k/2)
dx_k}{{2\sigma_{g_{k}}^2}}\\
\label{powerk}&=&\frac{SNR_{target}^{'}
K(\tau)}{{2\sigma_{g_{k}}^2}}
\end{eqnarray}
where
\begin{equation}
K(\tau)=\int_\tau^\infty \frac{1}{x_k}\exp(-x_k/2) dx_k
\end{equation}
and $\tau=-2\ln(1-\rho_{target})$. We observe that $E[P_k]$
inversely proportional to the variance of the fading coefficient,
${\sigma_{g_k}^2}$.

The optimum power allocation problem in this case becomes
\begin{eqnarray}
\underset{P_s,k}\min &   P_s+E{[P_k]}\label{singlemodelopt}\\
\text{s. t.}& \text{Prob}(SNR_d \le {SNR_{target}}) \le \rho_{target}\label{singlemodeloutage}\\
&P_s\abs{f_k}^2\geq{SNR_{target}}\label{singlemodeldecode}
\end{eqnarray}
Theorem \ref{theorem1} is valid for
(\ref{singlemodelopt})-(\ref{singlemodeldecode}) as well, i.e.,  the
optimum source power $P_s^{**}$, has to be one of the $(M+1)$
possibilities. The proof follows the same steps with the total power
expression (\ref{optprob}) replaced by (\ref{singlemodelopt}), i.e.,
$\sum_{i \in A_R^\dag} \int_{\gamma^\dag}^\infty
\frac{1}{2\sigma_{g_i}^2x_i} \exp(\frac{-x_i}{2\sigma_{g_i}^2})dx_i$
should be replaced by $\frac{ K(\tau)}{{2\sigma_{g_{k}}^2}}$.

The optimum solution can be expressed as
\begin{equation}
P_s^{**} = SNR_{target}/\abs{f_{k^{**}}}^2 \label{singleoptpower}
\end{equation}
\begin{equation}
k^{**}  = \arg\underset{|h|^2<2\sigma_{g_{k}}^2/
K(\tau)}\min\frac{1}{\abs{f_k}^2}+\frac{K(\tau)}{{2\sigma_{g_k}^2}}\left(1-\frac{|h|^2}{|f_k|^2}
\right)^+ \label{singleoptk}
\end{equation}
(\ref{singleoptpower})-(\ref{singleoptk}) result in only the relay
selected by the source, $k^{**}$, satisfying SNR target. Thus, each
relay can decide whether it is the selected node by examining its
own received SNR.

From (\ref{tau}) and (\ref{powerk}), we note the tradeoff between
the outage probability and the additional power expenditure in this
scheme as well. We also note that the relay threshold $\tau_k$ is a
scaled version of $\sigma_{g_k}^2$ for each relay $i$. The
complexity for calculating the relay threshold at the source is thus
significantly less compared to that of the optimum distributed power
allocation scheme derived in Section \ref{DRA}, making the model and
the corresponding strategy given in this section attractive from a
practical stand point. However, we note that, with this scheme,
since {\it exactly} one relay will be reliable, additional power may
be needed as compared to the optimum distributed strategy to satisfy
the same outage requirement.

\section{Numerical Results}
\label{NR}
\begin{figure} [t]
\centering
\includegraphics[width=3.7in]{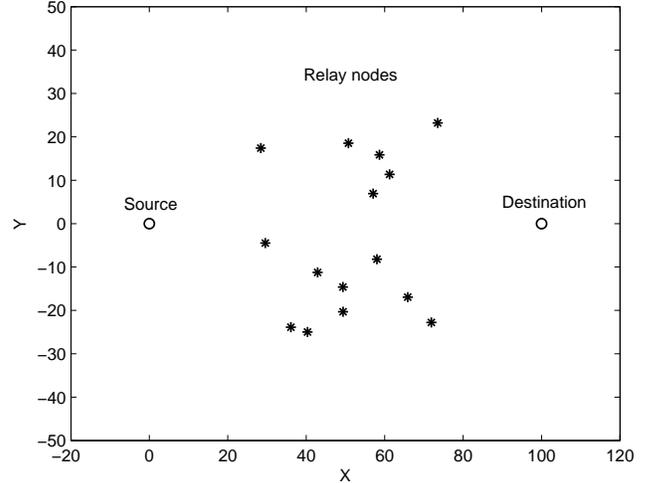}
\caption{System set-up for the simulation.} \label{fig-model}
\end{figure}
In this section, we present numerical results related to the
performance of the proposed distributed power allocation schemes. We
consider a relay network consisting of a source and a destination
$100~m$ apart, and $N=15$ relay nodes that are distributed in a
$50\times50 ~m^2$ square area, as shown in Figure \ref{fig-model}.
We consider the fading model as in \cite{Laneman_coop}, i.e., the
variance of the channel gain is proportional to the distance between
nodes. Thus, we have $\sigma^2_{f_i}=C/d_{SR_i}^{\alpha}$,
 $\sigma^2_{g_i}=C/d_{R_iD}^{\alpha}$ and
$\sigma^2_{h}=C/d_{SD}^{\alpha}$, where $d_{AB}$ is the distance
between node $A$ and $B$, and $S$, $D$ and $R_i$ denote the source,
the destination and the $i$th relay node, respectively. The
path-loss exponent is denoted by $\alpha$. $C$ is a constant that is
expressed as $C=G_tG_r\lambda^2/(4\pi)^2L$, where $G_t$ is the
transmitter antenna gain, $G_r$ is the receiver antenna gain,
$\lambda$ is the wavelength, and $L$ is the system loss factor not
related to propagation ($L \geq 1$). The values $\alpha=3$,
$G_t=G_r=1$, $\lambda=1/3~m$ (carrier frequency $f=900~MHz$), $L=1$,
are used throughout the simulations. The AWGN variances on all
communication links are assumed to be $10^{-10}$. We set
$SNR_{target}=10$ as the system SNR requirement.

Simulation results are presented to demonstrate the performance of
the proposed power allocation strategies. Specifically, we plot
$E[P_{total}]$, the expected value of the total power expended
versus $\rho_{outage}$, the target outage probability. Note that in
the theoretical analysis, there is no outage in the optimum
centralized power allocation (OCPA) in Section \ref{sysmodel}, since
the source and the relay can always adjust their transmit power to
satisfy the SNR requirement at the destination. For a fair
comparison, we define that an outage occurs for OCPA when the total
transmit power is higher than a given power constraint. This is
reasonable since if there is no maximum power constraint, the
expected value of the transmit power goes infinite to achieve a zero
outage probability on a fading channel.

We first compare the performance among the proposed optimum
distributed power allocation (ODPA) scheme, the OCPA scheme, and the
random relay selection (RRS) scheme, in which the source randomly
selects one out of all relays with equal probability to forward the
source data. We observe in Figure \ref{fig-comp} that a substantial
amount of power is saved by employing ODPA, with respect to RRS. The
power savings is more pronounced for low outage probability values.
As expected, an additional power expenditure, which is the penalty
of lack of full CSI, is introduced by ODPA. We observe that the
additional power expenditure decreases as the outage probability
increases, which is expected from the discussion on the tradeoff
between the outage probability and the additional power expenditure
in Section \ref{DRA}.

\begin{figure}[t]
\centering
\includegraphics[width=3.7in]{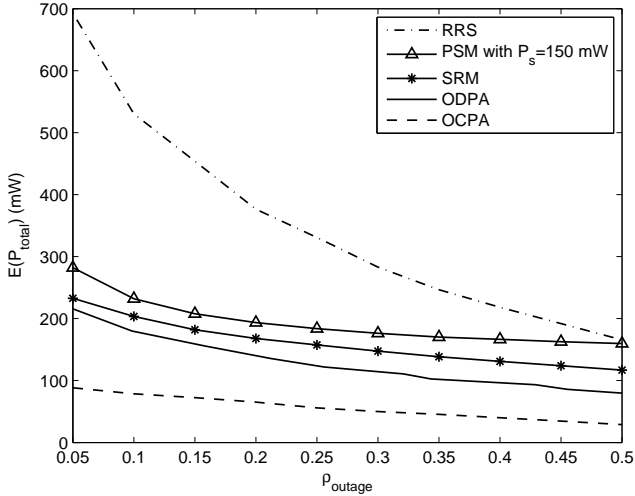}
\caption{$E[P_{total}]$ vs $\rho_{outage}$ for different power
allocation schemes.} \label{fig-comp}
\end{figure}
\begin{figure}[t]
\centering
\includegraphics[width=3.7in]{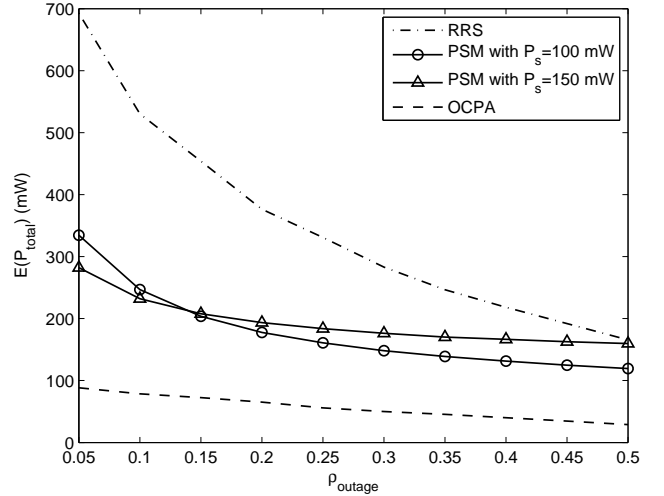}
\caption{$E[P_{total}]$ vs $\rho_{outage}$ for the passive source
model (PSM).} \label{fig-psvsource}
\end{figure}

We also compare all of the proposed distributed power allocation
schemes in Figure \ref{fig-comp}. As expected, we observe that the
best performing scheme is ODPA. Passive source model (PSM) and
single relay model (SRM) both have some performance loss due to the
fact that, for PSM the values of the source power $P_s$ and the
threshold $\gamma$ are fixed; for SRM only one relay node is used
for forwarding transmission. However, the two special cases still
outperform RRS by considering the limited available CSI for power
allocation, and they simplify the optimization process of ODPA and
facilitate the implementations. Thus, PSM and SRM may be preferred
when computational complexity is at a premium. When
$\rho_{outage}=0.05$, approximately, $80\%$, $77\%$ and $67\%$ power
is saved by ODPA, SRM and PSM with respect to RRS, respectively.

Figure \ref{fig-psvsource} remarks that the performance of the
system with PSM depends strongly on the value of the source power
(which is fixed). For low outage probability values, a high source
power is favorable since it reduces the SNR contribution from the
relay nodes, and hence the transmit power of the relay nodes. On the
other hand, for high outage probability values, the source power
becomes a lower bound for the total power. Thus, a low source power
is preferred in this case.

We also investigate the effect of the direct link on the
performance. Figure \ref{fig-psvsccontrbt} and Figure
\ref{fig-sglrlycontrbt} show the effect of the direct link SNR
contribution on PSM and SRM, respectively. It is observed that a
small amount of power savings is obtained when the direct link is
considered. This amount vanishes as the quality of the direct link
decreases. With this observation, when the direct link has a poor
channel quality, the transmitting relay $i$ can forward the signal
with power $SNR_{target}/\abs{g_i}^2$ instead of
$SNR'_{target}/\abs{g_i}^2$ without a significant performance loss.
Employing such a strategy has the advantage that, the direct link,
$h$, is not required for calculating $SNR'_{target}$, and thus the
amount of feedback from the destination is reduced.
\begin{figure}[t]
\centering
\includegraphics[width=3.7in]{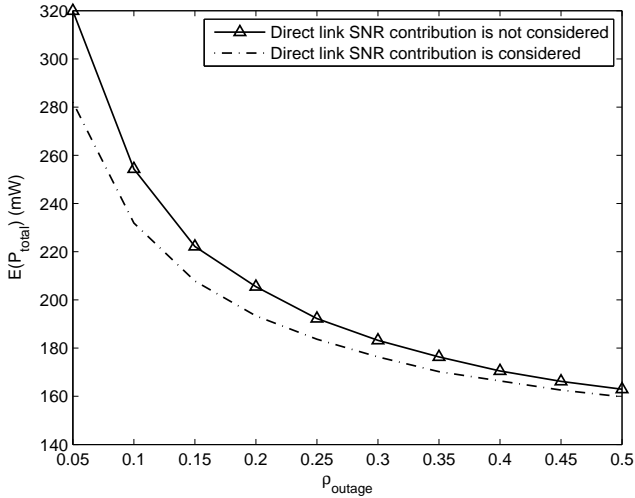}
\caption{Effect of the direct link SNR contribution
 on the passive source model (PSM) ($P_s=150~mW$).} \label{fig-psvsccontrbt}
\end{figure}
\begin{figure}[h]
\centering
\includegraphics[width=3.7in]{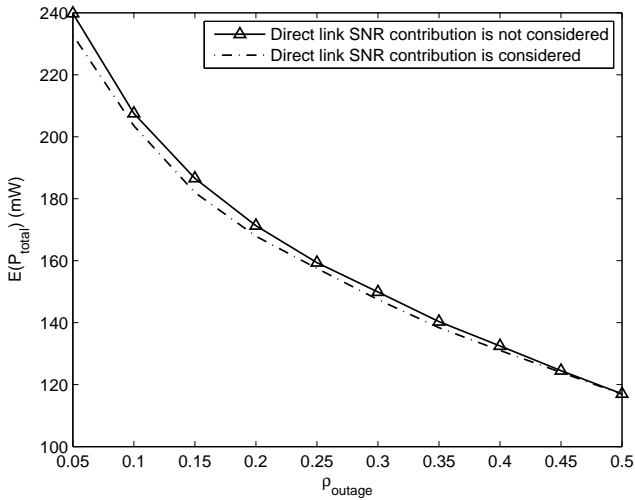}
\caption{Effect of the direct link SNR contribution
 on the single relay model (SRM).} \label{fig-sglrlycontrbt}
\end{figure}
\begin{figure}[h]
\centering
\includegraphics[width=3.7in]{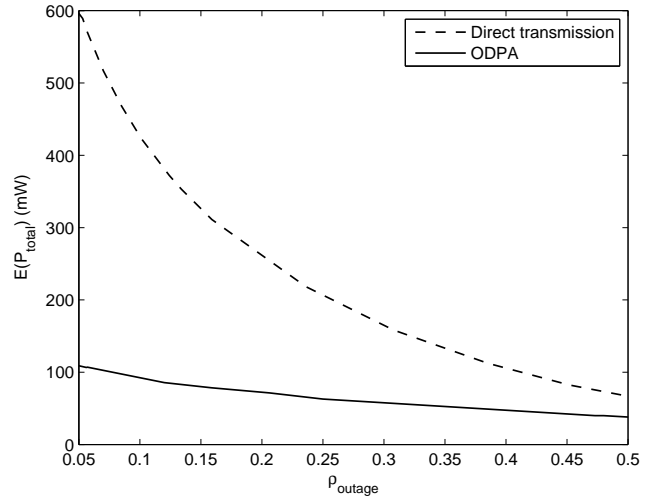}
\caption{Comparison of the relay-assisted transmission scheme ODPA
and the direct transmission scheme.} \label{fig-direct}
\end{figure}

In addition, to show the power efficiency advantage of the
relay-assisted transmission scheme ODPA, we compare the performances
of ODPA and the direct transmission scheme where the signal is
transmitted from the source to the destination via the direct link
only. To show that ODPA benefits more general networks than the one
we considered in Figure \ref{fig-model} where the direct link
distance is larger than that of any source-to-relay or
relay-to-destination link, we now consider that the destination's
position is randomly chosen in the area of $X\times Y=[20,100]\times
[-50,50]$ for each realization, while the source and relay nodes
remain in the same position as in Figure \ref{fig-model}. In Figure
\ref{fig-direct}, we plot the expected value of total power
expenditure, $E[P_{total}]$, versus the target outage probability,
$\rho_{outage}$, for ODPA and the direct transmission scheme. We
observe that in the absence of the relays, direct transmission
scheme requires a relatively high power expenditure to achieve the
same outage probability as compared to ODPA. It is observed that the
proposed relay-assisted transmission scheme provides significant
performance gain in terms of power efficiency upon the direct
transmission. This is intuitively pleasing since the relay selection
and power allocation algorithms in the proposed scheme guarantee
that the more power efficient way is always selected out of the
relay-assisted transmission and the direct transmission for each
channel realization.

\section{Conclusion} \label{Conc}
In this paper, we addressed the distributed power allocation problem
for parallel relay networks. Given the partial CSI available at the
source and the relay nodes, we proposed a distributed relay decision
mechanism and developed the optimum distributed power allocation
scheme. By optimizing the relay selection strategy and power
allocation, the optimum distributed power allocation strategy
performs close to the optimum centralized scheme. We have also
considered two simple distributed power allocation strategies, the
passive source model and the single relay model. Both schemes have
significantly less computational complexity requirements at the
source with a modest sacrifice in performance. Our main result is
that by using distributed power allocation and partial CSI, we can
develop power efficient transmission schemes, reducing the amount of
control traffic overhead for relay-assisted communications.

\end{document}